%
\documentclass[runningheads]{llncs}
\usepackage{amsfonts}
\usepackage{amsmath}
\usepackage{booktabs} 
\usepackage{caption} 
\usepackage{graphicx}
\usepackage{pgfplots}
\usepackage[all]{nowidow}
\usepackage[utf8]{inputenc}
\usepackage{tikz}
\usetikzlibrary{er,positioning,bayesnet}
\usepackage{multicol}
\usepackage{algpseudocode,algorithm,algorithmicx}
\usepackage[inline]{enumitem} 
%
\newtheorem{assumption}{Assumption}

\definecolor{blue}{HTML}{1F77B4}
\definecolor{orange}{HTML}{FF7F0E}
\definecolor{green}{HTML}{2CA02C}

\pgfplotsset{compat=1.14}

\setlength{\floatsep}{3pt plus 1pt minus 1pt}
\setlength{\textfloatsep}{3pt plus 1pt minus 1pt}
\setlength{\intextsep}{3pt plus 1pt minus 1pt}
\setlength{\abovecaptionskip}{2pt plus 1pt minus 1pt}

\begin{document}
\title{On Optimal Control of Discounted Cost Infinite-Horizon Markov Decision Processes Under  Local State Information Structures}
\titlerunning{Local State Information Markov Decision Process}
%
\author{Guanze Peng\inst{1} \and
Veeraruna Kavitha\inst{2} \and
Quanyan Zhu\inst{1}}
%
%
\institute{Dept. of Electrical and Computer Engineering, New York University
\email{\{guanze.peng,quanyan.zhu\}@nyu.edu}\\ \and
Industrial Engineering and Operation Research Department, Indian Institutes of Technology\\
\email{vkavitha@iitb.ac.in}}
\maketitle              
\begin{abstract}
This paper investigates a class of optimal control problems associated with Markov  processes with local state information. The decision-maker has only a local access to a subset of a state vector information as often encountered in decentralized control problems in multi-agent systems. Under this information structure, part of the state vector cannot be observed. We leverage ab initio principles and find a new form of Bellman equations to characterize the optimal policies of the control problem under local information structures. The dynamic programming solutions feature a mixture of dynamics associated unobservable state components and the local state-feedback policy based on the observable local information. We further characterize the optimal local-state feedback policy using linear programming methods. To reduce the computational complexity of the optimal policy, we propose an approximate algorithm based on virtual beliefs to find a sub-optimal policy. We show the performance bounds on the sub-optimal solution and corroborate the results with numerical case studies. 

\keywords{Partially Observable Markov Decision Process \and Bellman Equation \and Linear Programming \and Approximate Algorithms \and Distributed Control.}
\end{abstract}
\section{INTRODUCTION}
The subject of Markov decision process (MDP) has been broadly explored in the area of robotics, wireless communication, and economics. In MDPs, the decision-maker is assumed to have complete state information. Notwithstanding, in many real world application, the direct observation of the state is either impossible or difficult to acquire (See \cite{sharma1997framework}, \cite{penggame}, \cite{huang2019continuous}). Therefore, partially observable Markov decision process (POMDP) becomes a standard framework where the decision-maker does not have direct access to the state information but indirect observations that are correlated with the true state. A substantial literature has been established over the past few decades, including \cite{puterman2014markov,altman1999constrained,krishnamurthy2016partially,sondik1978optimal}. 

In standard POMDPs, the state information as a whole is taken as incompletely observable and the observations are statistically dependent on the state. In this work, we consider a class of problems where the state takes the form of a vector and its information can be partitioned into two components. One component contains a subset of states that are completely observable while the other component consists of a subset of states that are completely unobservable. This class of problem often arises from distributed multi-agent control systems, where one agent can only observe his own state while the state information of the others are not observable. We refer this class of problems as MDP under Local State Information or LSI-MDP, in short.

One difference between this class of problems and the classical POMDPs is that decision-maker of LSI-MDP has no information of a subset of states. As a result, the optimal control policy of the decision-maker takes the form of local-state feedback, which depends solely on the observable components of the state vector. 
We use two examples to motivate the  LSI-MDP model as follows.

\begin{enumerate}

	\item \textit{Team Optimization Problem and Multiagent System Problems}\\
	In both team optimization problem and multiagent systems, multiple agents make decisions based on their observations to optimize their objective functions, and the decisions can impact the state of the system, which is the aggregation of states of all the agents (See \cite{singh1994multiagent}, \cite{gupta2015existence}). If their objective functions are fully aligned, the problem becomes a team problem. If their objective functions are partially aligned with each other, the problem becomes a nonzero-sum game problem. Our work studies this problem from the perspective of a single agent in which the agent knows his own state but has no access to the states of other agents.
	
	\item \textit{Optimal Planning in Robotics}\\
	The robots plan the route or actions based on the observations or information it acquires (See \cite{kaelbling1998planning,parkan1999decision}). Nevertheless, due to the physical limitation of the sensors, there is no guarantee that the robots are capable of obtaining the complete observation of the state (See \cite{sharma1997framework}). Hence, the state can be divided into two parts: one part is observable and the other part is unobservable. As the unobservable part of the state is also influenced by the actions, this scenario coincides with our model.

\end{enumerate}

Specifically, our contributions can be summarized as follows:
\begin{itemize}
	\item We formulate an LSI-MDP problem and characterize the local-state feedback policy using the principle of optimality. We identify the connections with MDPs and POMDPs. 
	\item We show that the local-state feedback policy is characterized by a mixture of open-loop deterministic nonlinear system dynamics and a feedback solution arising from dynamic programming. 
	\item We develop a method termed as \textit{Virtual Belief Method} to provide a suboptimal stationary local feedback policy. We can show that the worst-case performance degradation is bounded.
\end{itemize}

This paper is organized as the following, In Section II, we present the problem formulation and identify the relations of our framework with MDPs and POMDPs. In Section III, we use the principle of optimality to establish the associated Bellman-like equation. In Section IV, we propose a method to find suboptimal solutions. In Section V, we study several special cases regarding the structures of the system dynamics, cost function, and transition probabilities. It is shown that under some of the special cases, the method proposed in Section IV can yield optimal solution. In Section VI, we conclude this work and give possible directions of the future work.

\section{PROBLEM FORMULATION}
In this section, we present the problem formulation of the infinite-horizon discounted cost optimal control problem under local-state information. Let $\mathcal{A}$ be the finite action space and $\mathcal{X}$ be the finite state space. The state of the dynamical system considered in this work, is assumed to be a joint process of two \textit{substates}, one of which is called \textit{observable substate}. The $\textit{observable substate}$, which is denoted by $x_{\rm o}$, can be observed to the agent directly and utilized for the decision making. The other substate is called \textit{unobservable substate}, which is denoted by $x_{\rm u}$, cannot be obtained as observations by the agent. Thus, the state space is the Cartesian product of two state subspaces as follows:
\begin{equation*}
\mathcal{X}=\mathcal{X}_{\rm u}\times\mathcal{X}_{\rm o},
\end{equation*}
where $\mathcal{X}_{\rm o}$ contains all the possible observable substates $x_{\rm o}$ and $\mathcal{X}_{\rm u}$ contains all the possible unobservable substates $x_{\rm u}$. 

The stage cost function is assumed to be a nonnegative and bounded stationary function
\begin{equation*}
c(x,a)\ :\ \mathcal{X}\times\mathcal{A}\ \rightarrow\ \mathbb{R}^+.
\end{equation*}
The transition probability is given by a stationary function
\begin{equation*}
p(x^\prime|x,a)\ :\ \mathcal{X}_{\rm o}\times\mathcal{X}_{\rm u}\times\mathcal{A}\ \rightarrow\ [0,1]^{|\mathcal{X}|}.
\end{equation*}
More specifically, it can be written as $p(x_{\rm o}^\prime,x_{\rm u}^\prime|x_{\rm o},x_{\rm u},a)$.

In this work, we study the following criteria:
\begin{equation}\label{opt1}
V^{\pi}_{\alpha_{\rm u}}(x_{{\rm o},0}) =\mathbb{E}^{\pi}\left[\sum_{t=0}^\infty \beta^tc(x_{{\rm o},t},x_{{\rm u},t},a_t)\bigg|x_{{\rm o},0},\right],
\end{equation}
where $x_{{\rm o},t}$, $x_{{\rm u},t}$, and $a_t$ are the observable substate, unobservable substate, and action at time $t$, respectively. We aim to determine the policy which minimizes \eqref{opt1}. Here, $\beta$ is the discount factor and $0\leq \beta<1$. The distribution of the initial state $x_{{\rm u},0}$ is given by $\alpha_{\rm u} $,  $\pi$ a policy is a collection of the decision rules, and each decision rule is a mapping from the space of the history of states and actions to the action space. The agent only has access to the observable substate $x_{\rm o}$ at each time instant. Therefore, his decision can only be dependent on the observation history formed by $x_{\rm o}$. Formally, denote the state-action history of the original system at time $t$ as
\begin{equation*}
    \begin{aligned}
        h_t=\left\{x_{{\rm o},t},x_{{\rm o},t-1},...,x_{{\rm o},0},\alpha_{\rm u},a_{t-1},a_{t-2},...,a_{0}\right\},
    \end{aligned}
\end{equation*}
and 
\begin{equation*}
    h_0=\left\{x_{{\rm o},0},\alpha_{\rm u}\right\}.
\end{equation*}
Let $\mathcal{H}_t$ be the space of $h_t$. By definition, $\pi=\left\{d_t\right\}_t$ and $d_t: \mathcal{H}_t\rightarrow \Delta\left(\mathcal{A}\right)$.

Here, $x_{\rm u}$ can be regarded as {unobservable} uncertainty in the dynamic system. Thus, to cope with this uncertainty, we have the expectation in \eqref{opt1} that averages out the randomness induced by $x_{\rm u}$.

It is clear that our framework differs from the classical MDPs and POMDPs and there exists close connections between the LSI-MDP framework and these two models. To see this, we let the dynamical system evolve according to the following rule
\begin{equation*}
(X_{\rm o}^\prime,X_{\rm u}^\prime)=f(x_{\rm o},x_{\rm u},a,\Gamma),
\end{equation*}
where $\Gamma$ is an exogenous random variable.

\begin{assumption}
	There exists a deterministic function $g(\cdot,\cdot)$ such that at each time instant, 
	\begin{equation}\label{compose}
	X_{{\rm u},t} = g(X_{{\rm o},t},\Tilde{X}_{{\rm u},t}),
	\end{equation}
	where $X_{{\rm o},t}$ and $\Tilde{X}_{{\rm u},t}$ are conditionally independent conditioned on a given state-action history.
\end{assumption}
This assumption means that we can decompose the unobservable into two parts: the first part is correlated with the observable substate and second part is independent of the observable substate.

\begin{assumption}
	For every $x_{\rm o}\in\mathcal{X}_{\rm o}$, $g(x_{\rm o},\Tilde{x}_{\rm u})$ is an injective function.
\end{assumption}

This assumption implies that, given a pair of $(x_{\rm o},x_{\rm u})$, we can identify the value of $\tilde{x}_{\rm u}$ uniquely.

Next, we use the following theorem to construct a controlled Markov process in which $\{x_{{\rm u},t}\}_t$ is conditionally independent of $\{x_{{{\rm o}},t}\}_t$.

\begin{theorem} Under \textit{Assumptions 1 and 2}, for any given fixed policy, there exists a random process $\{\Tilde{X}_{{\rm u},t}\}_t$ which satisfies the following:  
\begin{itemize}
    \item[a)] it evolves (conditionally) independently of $\{{X}_{{\rm o},t}\}_t$, i.e., such that
    \begin{equation*}\label{Eqn_cond_ind}
        p(x_{\rm o}^\prime,x_{\rm u}^\prime|x_{\rm o},x_{\rm u},a) = p(x_{\rm o}^\prime|x_{\rm o},a)p(\tilde{x}_{\rm u}^\prime|\tilde{x}_{\rm u},a),
    \end{equation*}
    where
      $x_{\rm u} = g(x_{\rm o},\Tilde{x}_{\rm u})$ and $x^\prime_{\rm u}  = g(x_{\rm o},\Tilde{x}^\prime_{\rm u})$;
	and  
	\item[b)] one  can represent the state of the system as $({X}_{{\rm o}, t},\Tilde{X}_{{\rm u}, t})$ with the same amount of information;
	at time $t+1$, there exists a deterministic function $\tilde{f}$ such that
	\begin{equation*}
	\begin{aligned}
	(X_{{\rm o},t+1},X_{{\rm u},t+1})&=
 \Tilde{f}(X_{{\rm o},t},\Tilde{X}_{{\rm u},t},a,\Gamma).
	\end{aligned} 
	\end{equation*}
\end{itemize}
\end{theorem}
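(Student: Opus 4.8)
The plan is to establish part (b) first, since it amounts to a lossless relabeling of the state variable, and then to derive the factorization in part (a) from it together with the conditional independence granted by Assumption 1.

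First I would use Assumption 2 to make the decomposition of Assumption 1 invertible. For each fixed $x_{\rm o}$, injectivity of the map $\tilde x_{\rm u}\mapsto g(x_{\rm o},\tilde x_{\rm u})$ yields a well-defined left inverse on its range, so the map $\Phi:(x_{\rm o},x_{\rm u})\mapsto(x_{\rm o},\tilde x_{\rm u})$, where $\tilde x_{\rm u}$ is the unique preimage satisfying $x_{\rm u}=g(x_{\rm o},\tilde x_{\rm u})$, is a bijection between the reachable state space and its image, with inverse $(x_{\rm o},\tilde x_{\rm u})\mapsto(x_{\rm o},g(x_{\rm o},\tilde x_{\rm u}))$. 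Setting $\tilde X_{{\rm u},t}:=\Phi(X_{{\rm o},t},X_{{\rm u},t})$ pointwise along trajectories defines the desired process. Because $\Phi$ is a bijection, the pairs $(X_{{\rm o},t},\tilde X_{{\rm u},t})$ and $(X_{{\rm o},t},X_{{\rm u},t})$ generate the same $\sigma$-algebra, which is exactly the ``same amount of information'' claim in (b). Substituting $X_{{\rm u},t}=g(X_{{\rm o},t},\tilde X_{{\rm u},t})$ into the system evolution $(X_{{\rm o},t+1},X_{{\rm u},t+1})=f(X_{{\rm o},t},X_{{\rm u},t},a,\Gamma)$ and defining $\tilde f(x_{\rm o},\tilde x_{\rm u},a,\Gamma):=f(x_{\rm o},g(x_{\rm o},\tilde x_{\rm u}),a,\Gamma)$ gives the deterministic representation required in (b).

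For part (a), I would transport Assumption 1 through the bijection $\Phi$. Since $\Phi$ and its inverse are deterministic and measurable, computing $p(x_{\rm o}',x_{\rm u}'\mid x_{\rm o},x_{\rm u},a)$ is equivalent to computing the transition of the relabeled pair $p(x_{\rm o}',\tilde x_{\rm u}'\mid x_{\rm o},\tilde x_{\rm u},a)$ evaluated at the $\tilde x_{\rm u}'$ determined by $x_{\rm u}'=g(x_{\rm o}',\tilde x_{\rm u}')$. Assumption 1, applied at time $t+1$, asserts that the newly generated $X_{{\rm o},t+1}$ and $\tilde X_{{\rm u},t+1}$ are conditionally independent given the state-action history through time $t$; under a fixed policy and the Markov dynamics this conditioning reduces to the current state-action $(x_{\rm o},\tilde x_{\rm u},a)$. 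Hence the joint kernel factors into the product of its conditional marginals, and it remains to identify those marginals with $p(x_{\rm o}'\mid x_{\rm o},a)$ and $p(\tilde x_{\rm u}'\mid\tilde x_{\rm u},a)$.

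The hard part will be this last identification, namely showing that each marginal kernel is \emph{autonomous}: that the observable transition does not depend on $\tilde x_{\rm u}$ and the residual transition does not depend on $x_{\rm o}$. This is precisely where the strength of the conditional-independence hypothesis in Assumption 1 must be used in full, along the whole trajectory rather than at a single time slice; I would argue that conditional independence holding for every fixed policy forces the cross-terms to vanish, so that the two components evolve as independent controlled chains and the product form $p(x_{\rm o}'\mid x_{\rm o},a)\,p(\tilde x_{\rm u}'\mid\tilde x_{\rm u},a)$ holds. Additional care is needed in the measure-theoretic bookkeeping when the change of variables is applied jointly to the current and next states, since $g$ is invoked with the current $x_{\rm o}$ on the conditioning side and with the next $x_{\rm o}'$ on the outcome side.
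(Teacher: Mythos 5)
Your proposal follows essentially the same route as the paper's own proof in Appendix~A. Part (b) is obtained by substituting $X_{{\rm u},t}=g(X_{{\rm o},t},\tilde{X}_{{\rm u},t})$ into $f$ and defining $\tilde{f}(x_{\rm o},\tilde{x}_{\rm u},a,\Gamma):=f(x_{\rm o},g(x_{\rm o},\tilde{x}_{\rm u}),a,\Gamma)$; the paper dresses this as an induction over $t$, but the content is identical, and your $\sigma$-algebra justification of ``same amount of information'' is if anything more precise. Part (a) likewise begins with the same change of variables $p(x_{\rm o}^\prime,x_{\rm u}^\prime|x_{\rm o},x_{\rm u},a)=p(x_{\rm o}^\prime,\tilde{x}_{\rm u}^\prime|x_{\rm o},\tilde{x}_{\rm u},a)$, licensed by the injectivity of $g(x_{\rm o},\cdot)$ from Assumption~2.

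The one step you leave open --- showing that the two conditional marginals are \emph{autonomous}, i.e.\ that $p(x_{\rm o}^\prime|x_{\rm o},\tilde{x}_{\rm u},a)$ does not depend on $\tilde{x}_{\rm u}$ and $p(\tilde{x}_{\rm u}^\prime|x_{\rm o},\tilde{x}_{\rm u},a)$ does not depend on $x_{\rm o}$ --- is precisely the step the paper does not prove either: its proof ends with the single clause ``as $\tilde{x}_{\rm u}$ and $x_{\rm o}$ are independent.'' So you have reproduced the paper's argument in full and, correctly, located its weakest point. Be aware that this point is substantive, not bookkeeping: if Assumption~1 is read per time slice (conditional independence of $X_{{\rm o},t}$ and $\tilde{X}_{{\rm u},t}$ given the state-action history), the claimed factorization can fail. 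For instance, take $\tilde{X}_{{\rm u},t+1}:=X_{{\rm o},t}$; this is history-measurable, hence trivially conditionally independent of $X_{{\rm o},t+1}$ given the history, yet its transition kernel is $\delta_{x_{\rm o}}(\tilde{x}_{\rm u}^\prime)$, which depends on $x_{\rm o}$ rather than only on $(\tilde{x}_{\rm u},a)$, so no factorization of the form $p(x_{\rm o}^\prime|x_{\rm o},a)p(\tilde{x}_{\rm u}^\prime|\tilde{x}_{\rm u},a)$ exists. What closes the gap is exactly the stronger, process-level reading you gesture at in your final paragraph: if the trajectories $\{X_{{\rm o},t}\}_t$ and $\{\tilde{X}_{{\rm u},t}\}_t$ are conditionally independent given the action sequence, then the conditional law of $X_{{\rm o},t+1}$ given both paths and the actions equals its law given the observable path and the actions, which by Markovianity is $p(x_{\rm o}^\prime|x_{\rm o},a)$, and symmetrically for the residual process. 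Making that reading of Assumption~1 explicit would turn your sketch (and the paper's one-line assertion) into a complete proof; as written, both stop at the same place.
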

\begin{proof}
 See Appendix.A.
\end{proof}
 \qed

In view of the above theorem, in some of the future sections,  we focus on the class of MDPs with `conditionally independent' transition probabilities  as in the right hand side of equation (\ref{Eqn_cond_ind}).

To complete the earlier argument, here we discuss how our model is related to POMDPs and MDPs. In POMDP, the observation and the state are assumed to be statistically correlated. Our system formulation includes the case where $x_{{\rm u},t}$ and $x_{{\rm o},t}$ are conditionally independent and $x_{{\rm u},t}$ provides no information of $x_{{\rm u},t}$ at all (once actions are observed).
Hence it is a generalization of POMDP. 
When the relation between $x_{\rm o}$ and $x_{\rm u}$ can be described by a deterministic function, $\Bar{g}$, such that $x_{u}=\Bar{g}(x_{\rm o})$, then our framework reduces to a classical MDP, as $x_{\rm o}$ can represent the system state.

\section{DYNAMIC PROGRAMMING   with Beliefs}
As the substate $x_{\rm u}$ cannot be observed, the agent can form belief over the unobservable state. Denote the belief at time $t$ by $b({x}_{{\rm u},t})$, which evolves (depending upon observation $x_{o,t+1}$) according to 
\begin{equation}\label{update}
	    b({x}_{{\rm u},t+1})=\frac{\sum_{x_{{\rm u},t} }p(x_{{\rm o},t+1},x_{{\rm u},t+1}|x_{{\rm o},t},x_{{\rm u},t},a_t)b(x_{{\rm u},t})}{\sum_{\hat{x}_{{\rm u},t},\hat{x}_{{\rm u},t+1}}p(x_{{\rm o},t+1},\hat{x}_{{\rm u},t+1}|x_{{\rm o},t},\hat{x}_{{\rm u},t},a_t)b(\hat{x}_{{\rm u},t})},
\end{equation}
and 
\begin{equation*}
    b(x_{{\rm u},0}) = \alpha_u(x_{{\rm u},0}).
\end{equation*}
With a slight abuse of notation, let $b_t$ be the belief vector at time $t$. Thus, with the state denoted by $(x_{\rm o}, b)$, the system is Markovian. We define the transition function of the belief state as 
\begin{equation}\label{deterministic}
b_{t+1} = T(x_{{\rm o},t+1},x_{{\rm o},t},b_t,a_t).
\end{equation}
We would like to point out that the belief state acts as a deterministic nonlinear subsystem.

Define
\begin{equation*}
\bar{c}(x_{\rm o},b,a)=\sum_{{x}_{\rm u}}b({x}_{\rm u}){c}(x_{\rm o},x_{\rm u},a).
\end{equation*}
Since $x_{\rm u}$'s are not observable and we can only form belief over $x_{\rm u}$. After taking expectation using the belief of $x_{\rm u}$, we define the new objective function 
\begin{equation}\label{opt3}
\bar{V}_{\alpha_{\rm u}}^{\bar{\pi}}(x_{{\rm o},0},b_0) =\mathbb{E}^{\bar{\pi}}\left[\sum_{t=0}^\infty \beta^t\bar{c}(x_{{\rm o},t},b_t,a_t)\bigg|x_{{\rm o},0} \right].
\end{equation}
As we have mentioned above, the new system whose state is $(x_{\rm o},b)$ is Markovian. Denote the set of Markovian deterministic policies in this new system by $\Pi_{\rm MD}$. That is, the decision at time $t$ is only dependent on the current state $(x_{{\rm o},t},b_t)$.\\

For the system whose state is $(x_{{\rm o}},b)$, the state-action history at time $t$ is given by
  \begin{equation*}
    \begin{aligned}
            \bar{h}_t=\left\{x_{{\rm o},t},x_{{\rm o},t-1},...,x_{{\rm o},0},b_t,b_{t-1},...,b_0,a_{t-1},a_{t-2},...,a_{0}\right\},
    \end{aligned}
\end{equation*}
and 
\begin{equation*}
     \bar{h}_0=\left\{x_{{\rm o},0},b_0\right\}.
\end{equation*}
It is worth noting that $\bar{h}_t$ provides the same information as $h_t$, as $b_t$ evolves according to the rule \eqref{update}. 

\begin{lemma} If a given pair of policies $\pi=\left\{d_t\right\}_t$ and $\bar{\pi}=\left\{\bar{d}_t\right\}_t$ satisfies that $\bar{d}_t(\bar{h}_t)=d_t(h_t)$, then
	\begin{equation*}
	V^{\pi}_{\alpha_{\rm u}}(x_{{\rm o},0})=\bar{V}_{\alpha_{\rm u}}^{\bar{\pi}}(x_{{\rm o},0},b_0).
	\end{equation*}
\end{lemma}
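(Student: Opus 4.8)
The plan is to establish the equality by matching the two discounted-cost series term by term, exploiting the fact that the belief $b_t$ is exactly the conditional distribution of the unobservable substate given the observable history.

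\textbf{Step 1 (the belief is the true posterior).} I would first show by induction on $t$ that
\[
b_t(x_{\rm u}) = \mathbb{P}^{\pi}\!\left(x_{{\rm u},t}=x_{\rm u}\mid h_t\right)
\]
for each realization of $h_t$. The base case is the initialization $b_0=\alpha_{\rm u}$, and the inductive step is precisely the Bayesian update of the posterior upon applying $a_t$ and observing $x_{{\rm o},t+1}$, which is exactly what the update rule \eqref{update} computes: the numerator is the joint one-step push-forward of the posterior and the denominator is its normalization.

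\textbf{Step 2 (equal observable-action laws).} Using the hypothesis $\bar{d}_t(\bar{h}_t)=d_t(h_t)$ together with the fact that $\bar{h}_t$ carries the same information as $h_t$ (since $b_t$ is a deterministic function of the observable-action history via \eqref{deterministic}), I would argue by induction that $\pi$ and $\bar{\pi}$ induce the same law on the observable-action trajectory $\{(x_{{\rm o},t},a_t)\}_t$. At each stage the two policies place the same distribution on $a_t$, and the conditional law of $x_{{\rm o},t+1}$ given the past is, in both systems, the same belief-dependent observation kernel
\[
\sum_{x_{{\rm u},t},x_{{\rm u},t+1}} p\!\left(x_{{\rm o},t+1},x_{{\rm u},t+1}\mid x_{{\rm o},t},x_{{\rm u},t},a_t\right) b_t(x_{{\rm u},t}),
\]
which is exactly the normalizing denominator in \eqref{update} and depends on the history only through $(x_{{\rm o},t},b_t,a_t)$.

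\textbf{Step 3 (stage costs and summation).} From the definition of $\bar{c}$ and Step 1,
\[
\bar{c}(x_{{\rm o},t},b_t,a_t)=\sum_{x_{\rm u}} b_t(x_{\rm u})\,c(x_{{\rm o},t},x_{\rm u},a_t)=\mathbb{E}^{\pi}\!\left[c(x_{{\rm o},t},x_{{\rm u},t},a_t)\mid h_t\right].
\]
Taking outer expectations and using the tower property, together with Step 2 to identify the two outer laws, gives $\mathbb{E}^{\bar{\pi}}[\bar{c}(x_{{\rm o},t},b_t,a_t)]=\mathbb{E}^{\pi}[c(x_{{\rm o},t},x_{{\rm u},t},a_t)]$. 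Multiplying by $\beta^t$ and summing over $t$ yields the stated identity; since $c$ is nonnegative and bounded and $0\le\beta<1$, the series converges and the interchange of summation and expectation is justified by monotone convergence. I expect the main obstacle to be the bookkeeping in Step 1: one must carefully verify that the right-hand side of \eqref{update} coincides with $\mathbb{P}^{\pi}(x_{{\rm u},t+1}\mid h_{t+1})$, tracking the conditioning on $a_t$ and on the newly revealed $x_{{\rm o},t+1}$ in the normalization. Once this filter identity and the resulting observation kernel in Step 2 are pinned down, the remaining arguments are routine.
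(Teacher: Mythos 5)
Your proposal is correct and takes essentially the same route as the paper's proof: the paper likewise identifies $b_t$ as the Bayesian posterior of $x_{{\rm u},t}$ given the observable history (shown by the explicit inner-conditional-expectation computation at $t=1$, then ``progressing this way''), matches the stage costs via the tower property, and passes to the limit on partial sums. The only cosmetic differences are organizational --- you isolate the filter identity and the equality of observable-action laws as explicit inductive steps, and you invoke monotone convergence where the paper invokes dominated convergence; both are valid given that $c$ is nonnegative, bounded, and $\beta<1$.
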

\begin{proof}
    Since $\bar{d}_t(\bar{h}_t)=d_t(h_t)$ at every time instant, they generate the same action sequence $\{a_t\}_t$. To keep it simple, we provide the proof for deterministic policies, and the proof goes through in a similar way for randomized policies. 

Equation	\eqref{opt1} can be written as 
	\begin{equation*}
	\begin{aligned}
	&\mathbb{E}^{\pi}\left[\sum_{t=0}^\infty \beta^tc(x_{{\rm o},t},x_{{\rm u},t},a_t)\bigg| x_{{\rm o},0}\right]\\
	&=  \sum_{x_{{\rm u},0}}b_0(x_{{\rm u},0})c(x_{{\rm o},0},x_{{\rm u},0},d_0(h_0)) \\
	&\quad\quad\quad\quad\quad\quad\quad\quad\quad +\mathbb{E}^{\pi}\left[\sum_{t=1}^\infty \beta^tc(x_{{\rm o},t},x_{{\rm u},t},a_t)\bigg|x_{{\rm o},0}\right]\\
	&=\bar{c}(x_{{\rm o},0},b_0,d_0(h_0))+\mathbb{E}^{\pi}\left[\sum_{t=1}^\infty \beta^tc(x_{{\rm o},t},x_{{\rm u},t},a_t)\bigg|x_{{\rm o},0}\right]\\
	&=\bar{c}(x_{{\rm o},0},b_0,d_0(h_0))+\beta\mathbb{E}^{\pi}\left[  c(x_{{\rm o},1},x_{{\rm u},1},a_1)|x_{{\rm o},0}\right]\\
	&\quad\quad\quad\quad\quad\quad\quad\quad\quad +\mathbb{E}^{\pi}\left[\sum_{t=2}^\infty \beta^tc(x_{{\rm o},t},x_{{\rm u},t},a_t)\bigg|x_{{\rm o},0}\right],
	\end{aligned}
	\end{equation*}
	where $b_0(x_{{\rm u},0})$ is  equal to the distribution of the initial substate $x_{{\rm u},0}$ and can be interpreted as the belief of the unobservable substate at time $0$. At time $1$, the agent observes the history $h_1=\{x_{{\rm o},1},x_{{\rm o},0},b_0,a_0\}$ and chooses action $a_1$. Thus, the expectation in the second term can be expressed as 
	\begin{equation}\label{exp1}
    \begin{aligned}
    	\mathbb{E}^{\pi}[  c(x_{{\rm o},1},x_{{\rm u},1},a_1)|&x_{{\rm o},0}]\\
    	&=\mathbb{E}\left[\mathbb{E}[c(x_{{\rm o},1},x_{{\rm u},1},d_1(h_1))|h_1]|x_{{\rm o},0}\right],
    \end{aligned}
	\end{equation}
	where the inner conditional expectation of the RHS term in \eqref{exp1} is
	\begin{equation*}
	\begin{aligned}
	&\mathbb{E}^{\pi}[c(x_{{\rm o},1},x_{{\rm u},1},a_1)|x_{{\rm o},1},x_{{\rm o},0},b_0,a_1,a_0]\\
	&=\sum_{x_{{\rm u},1}}\frac{\sum_{x_{{\rm u},0}}p(x_{{\rm o},1},x_{{\rm u},1}|x_{{\rm o},0},x_{{\rm u},0},a_0)b_0(x_{{\rm u},0})}{\sum_{\hat{x}_{{\rm u},0},\hat{x}_{{\rm u},1}}p(x_{{\rm o},1},\hat{x}_{{\rm u},1}|x_{{\rm o},0},\hat{x}_{{\rm u},0},a_0)b_0(\hat{x}_{{\rm u},0})}\\
	&\quad\quad\quad\quad\quad\quad\quad\quad\quad\quad\quad\quad\quad\quad\quad\quad \cdot c(x_{{\rm o},1},x_{{\rm u},1},d_1(h_1))\\
	&=\sum_{x_{{\rm u},1}}b_1(x_{{\rm u},1}){c}(x_{{\rm o},1},x_{{\rm u},1},d_1(h_1))\\
	&=\bar{c}(x_{{\rm o},1},b_1,d_1(h_1)).
	\end{aligned}
	\end{equation*}
	
	An important property of this conditional expectation is that it captures the information structure of the agent. For example, at time $t=1$, after observing $x_{{\rm o},1}$, the agent aims to choose an action to minimize an objective function, $\mathbb{E}[c(x_{{\rm o},1},x_{{\rm u},1},a_1)|x_{{\rm o},1},x_{{\rm o},0},b_0,a_0,a_1]$. The expectation is taken over is induced by the unobservable substate $x_{{\rm u},1}$. This expectation requires us to have the knowledge of the distribution of $x_{{\rm u},1}$. 
	And we use the following as the distribution of $x_{{\rm u},1},$
	\begin{equation*}
	    \frac{\sum_{x_{{\rm u},0}}p(x_{{\rm o},1},x_{{\rm u},1}|x_{{\rm o},0},x_{{\rm u},0},a_0)b_0(x_{{\rm u},0})}{\sum_{\hat{x}_{{\rm u},0},\hat{x}_{{\rm u},1}}p(x_{{\rm o},1},\hat{x}_{{\rm u},1}|x_{{\rm o},0},\hat{x}_{{\rm u},0},a_0)b_0(\hat{x}_{{\rm u},0})},
	\end{equation*}
	
	as $x_{{\rm u},0}$ is also unknown and is averaged out using $b_0$. The outer expectation is taken with respect to $x_{{\rm o},1}$ conditioned on $x_{{\rm o},0}$ using the marginal distribution $\sum_{x_{{\rm u},1}}p(x_{{\rm o},1},x_{{\rm u},1}|x_{{\rm o},0},x_{{\rm u},0},a_0)$ by averaging $x_{{\rm u},0}$ out with $b_0$.
	
	Progressing this way for any $n < \infty$ we have the following:
    \begin{equation*}
    \begin{aligned}
        \mathbb{E}^{\pi}\bigg[\sum_{t=0}^n \beta^tc(x_{{\rm o},t},&x_{{\rm u},t},a_t)\bigg| x_{{\rm o},0}\bigg] \\ 
        &=\mathbb{E}^{\bar \pi}\left[\sum_{t=0}^n \beta^t {\bar c}(x_{{\rm o},t},b_{t},a_t)\bigg| x_{{\rm o},0}\right].
    \end{aligned}
    \end{equation*}
    By boundedness of $c(x_{\rm o},x_{\rm u},a)$ (and because  $\beta < 1$), the result follows by dominated convergence
	theorem by letting $n \to \infty$. 
\end{proof}
\qed

\begin{theorem} 
	\begin{equation*}
	\inf_{\pi\in\Pi}V^\pi_{\alpha_{\rm u}}(x_{\rm o})=\inf_{\pi\in\Pi}\bar{V}_{\alpha_{\rm u}}^{\bar{\pi}}(x_{{\rm o},0},b_0)=\inf_{\pi_{\rm MD}\in\Pi_{\rm MD}}\bar{V}_{\alpha_{\rm u}}^{\pi_{\rm MD}}(x_{{\rm o},0},b_0).
	\end{equation*}
\end{theorem}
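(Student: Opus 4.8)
The plan is to establish the two equalities separately. The first, $\inf_{\pi\in\Pi}V^\pi_{\alpha_{\rm u}}(x_{{\rm o},0})=\inf_{\pi\in\Pi}\bar V^{\bar\pi}_{\alpha_{\rm u}}(x_{{\rm o},0},b_0)$, follows from Lemma 1 once the two policy classes are identified. The key is the observation, already noted above, that $\bar h_t$ carries exactly the same information as $h_t$: the belief vectors $b_0,\dots,b_t$ are deterministic functions of $h_t$ through the update rule \eqref{update}. Hence, given $\pi=\{d_t\}_t$ I would set $\bar d_t(\bar h_t):=d_t(h_t)$, and conversely reconstruct $h_t$ from the observable and action entries of $\bar h_t$; on the set of reachable histories (the only ones affecting the value) this is a bijection between $\Pi$ and the policies of the belief-augmented system. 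For each matched pair Lemma 1 gives equal values, so taking the infimum over the two identified feasible sets yields the first equality.

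The second equality is the substantive part. Because $\Pi_{\rm MD}$ is a subset of all admissible belief-system policies, $\inf_{\pi_{\rm MD}\in\Pi_{\rm MD}}\bar V^{\pi_{\rm MD}}_{\alpha_{\rm u}}\ge\inf_{\bar\pi}\bar V^{\bar\pi}_{\alpha_{\rm u}}$ holds automatically, so the content is the reverse inequality: a Markovian deterministic policy must match the optimal value. Here I would invoke the dynamic-programming theory for discounted-cost MDPs applied to the belief process, whose state $(x_{\rm o},b)$ lives in $\mathcal X_{\rm o}\times\Delta(\mathcal X_{\rm u})$ and which, by the discussion preceding the theorem, is a genuine (Borel-state, finite-action) Markov decision process with one-stage cost $\bar c$ and transition kernel given by the marginal law of $x_{\rm o}'$ composed with the deterministic belief update $T$. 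I would define $\bar V^*(x_{\rm o},b):=\inf_{\bar\pi}\bar V^{\bar\pi}_{\alpha_{\rm u}}(x_{\rm o},b)$ and show it is the unique bounded fixed point of the Bellman operator
\begin{equation*}
(\mathcal T V)(x_{\rm o},b)=\min_{a\in\mathcal A}\Big\{\bar c(x_{\rm o},b,a)+\beta\,\mathbb E\big[V(x_{\rm o}',b')\,\big|\,x_{\rm o},b,a\big]\Big\},
\end{equation*}
which is a $\beta$-contraction on bounded functions since $\bar c$ inherits boundedness from $c$ and $0\le\beta<1$.

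Given the fixed point, the stationary policy $a^*(x_{\rm o},b)\in\arg\min_{a\in\mathcal A}\{\cdots\}$ is Markovian and deterministic, and a verification argument — iterating the Bellman equation and letting the horizon grow, exactly as in the dominated-convergence step of Lemma 1 — shows its value equals $\bar V^*$, so $\inf_{\Pi_{\rm MD}}\bar V^{\pi_{\rm MD}}_{\alpha_{\rm u}}\le\bar V^*$ and equality follows. The main obstacle I anticipate is not the contraction, which is routine, but the measurable-selection step hidden in the $\arg\min$: since $b$ ranges over the continuous simplex $\Delta(\mathcal X_{\rm u})$, one must guarantee that $a^*(\cdot,\cdot)$ can be taken measurable in $(x_{\rm o},b)$. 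Because $\mathcal A$ is finite this is mild — for instance selecting the least-indexed minimizer — but it is the one point where measurability of $\bar c$ and of the transition kernel in $b$ must be verified, and I would dispatch it by appealing to the standard measurable-selection theorems for discounted MDPs on Borel spaces (Bertsekas--Shreve; Hern\'andez-Lerma--Lasserre), which apply here verbatim.
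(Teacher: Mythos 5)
Your proposal is correct and follows essentially the same route as the paper: the first equality is obtained from Lemma~1 via the information equivalence of $h_t$ and $\bar h_t$, and the second by recasting the belief-augmented system as a discounted MDP on state $(x_{\rm o},b)$ (with the deterministic belief transition $T$) and invoking standard dynamic-programming theory for the sufficiency of Markovian deterministic policies. The only difference is one of detail: the paper delegates the entire second step to a citation of Puterman, whereas you spell out the contraction/verification argument and the measurable-selection issue on the continuous belief simplex --- a point the paper's finite-state citation actually glosses over, so your added care is warranted rather than redundant.
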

\begin{proof}
    	The first equality directly follows from \textit{Lemma 1}: given the same policy, $V^{\pi}_{\alpha_{\rm u}}(x_{\rm o})$ and $\bar{V}^{\pi}_{\alpha_{\rm u}}(x_{{\rm o},0},b_0)$ yield the same value. Thus, they are minimized simultaneously. 
	
	To prove second equality, we analyze the new system whose state is $(x_{\rm o},b)$ beforehand. This new system can be viewed as a special case of MDPs. In MDPs, the current state and the next state are correlated through some transition kernel. Define,
	\begin{equation*}
	    p(x_{\rm o}^\prime|x_{\rm o},b,a)=\sum_{x_{\rm u},x_{\rm u}^\prime}p(x_{\rm o}^\prime,x_{\rm u}^\prime|x_{\rm o},x_{\rm u},a)b(x_{\rm u})
	\end{equation*}
	In our case, we have  
	\begin{equation*}
	\begin{aligned}
	p(x_{\rm o}^\prime,b^\prime|x_{\rm o},b,a)&=p(x_{\rm o}^\prime,T(x_{\rm o}^\prime,x_{\rm o},b,a)|x_{\rm o},b,a)\\
	&=p(x_{\rm o}^\prime|x_{\rm o},b,a)\delta(b^\prime,T(x_{\rm o}^\prime,x_{\rm o},b,a)),
	\end{aligned}
	\end{equation*}
	where $\delta(\cdot,\cdot)$ is a Dirac delta function, as the belief evolves like a deterministic system. The rest of the proof is by \cite{puterman2014markov}.
\end{proof}
\qed

Using standard arguments of dynamic programming, we can write down the following Bellman equation:
\begin{equation}\label{DP1}
\begin{aligned}
u(x_{\rm o},b)=\min_{a}\bigg\{\bar{c}(x_{\rm o}&,b,a)\\
&+\beta\sum_{x^\prime_{\rm o},b^\prime}u(x_{\rm o}^\prime,b^\prime)p(x_{\rm o}^\prime,b^\prime|x_{\rm o},b,a)\bigg\}.
\end{aligned}
\end{equation}

We note that the system can be regarded as a mixture of two subsystems: one is MDP and the other is a nonlinear deterministic system. And the states of these two subsystems are intertwined through the transition probability.

Let $\alpha_{\rm o}(x_{\rm o})$ be the distribution of the initial observable substate $x_{\rm o}$. Moreover, let $\mathcal{B}$ be the \textit{reachable set} of the belief, which contains all the possible belief.  If $\mathcal{B}$ is infinite, then the number of constraints of linear programming formed by \eqref{DP1} are also infinite, even for finite state and action spaces. Therefore, solving this optimization problem is challenging using the classical linear programming method.

\section{VIRTUAL BELIEF METHOD}
In this section, we propose a method called \textit{virtual belief method}, which aims to approximate the system with an MDP and provide a suboptimal solution. We show that this proposed method reduces the complexity and yet guarantees the performance by a bounded term. 

In the virtual belief method, the agent is assumed to believe that at each time instant, the system is at ${x}_{\rm u}$ with probability $b_0({x}_{\rm u})$, which is equal to the distribution of the initial substate $x_{{\rm u},0}$. And this belief stays unaltered throughout the whole process. 

To proceed, let us formally define the virtual system constructed by this method. The transition probability is in this system given by
\begin{equation*}
\tilde{p}(x_{\rm o}^\prime|x_{\rm o},a)=\sum_{x_{\rm u}}p(x_{\rm o}^\prime|x_{\rm o},x_{\rm u},a)b_0(x_{\rm u}).
\end{equation*}
The new cost function now becomes
\begin{equation*}
\tilde{c}(x_{\rm o},a)=\sum_{{x}_{\rm u}}b_0({x}_{\rm u})c({x}_{\rm o},{x}_{\rm u},a).
\end{equation*}
The objective function in the new system is given by
\begin{equation*}
\tilde{V}^{\pi}_{\alpha_{\rm u}}(x_{{\rm o},0})= \mathbb{E}^{\pi}\left[\sum_{t=0}^\infty\beta^t\tilde{c}(x_{{\rm o},t},a_{t})\bigg| x_{{\rm o},0}\right].
\end{equation*}

It is straightforward to check that the system considered in the virtual belief method is a classical MDP. Following the procedures in \cite{puterman2014markov}, the Bellman equation associated with the MDP is 
\begin{equation}\label{virtualbellman}
\tilde{u}(x_{\rm o})=\min_{a}\left\{\tilde{c}(x_{\rm o},a)+\beta \sum_{x_{\rm o}^\prime}\tilde{u}(x_{\rm o}^\prime)\tilde{p}(x_{\rm o}^\prime|x_{\rm o},a)\right\}.
\end{equation}
To solve this MDP, we revisit the method of linear programming. Likewise, we begin with the primal linear programming.\\

\textbf{Primal LP$^{\prime}$ (Virtual Belief Model)}
\begin{equation*}
\begin{aligned}
\min_{\tilde{u}(x_{\rm o})}&\ \sum_{x_{\rm o}}\tilde{u}\left(x_{\rm o}\right){\alpha}_{\rm o}(x_{\rm o})\\
{\rm s.t.} &\ \tilde{c}(x_{\rm o},a)+\beta \sum_{x_{\rm o}^\prime}\tilde{u}(x_{\rm o}^\prime)
\tilde{p}(x_{\rm o}^\prime|x_{\rm o},a)\leq \tilde{u}(x_{\rm o}),\quad \forall\ x_{\rm o},a.
\end{aligned}
\end{equation*}

The corresponding dual LP is given by the following.\\

\textbf{Dual LP$^{\prime}$}
\begin{equation*}
\begin{aligned}
\max_{\tilde{y}(x_{\rm o},a)}\quad &\sum_{x_{\rm o},a}\tilde{y}(x_{\rm o},a)\tilde{c}(x_{\rm o},a)\\
{\rm s.t.}\quad 
& {\alpha}_{\rm o}(x^\prime_{\rm o})+\sum_{x_{\rm o},a}\beta \tilde{p}(x^\prime_{\rm o}|x_{\rm o},a)y(x_{\rm o},a)=\sum_{a}\tilde{y}(x^\prime_{\rm o},a),\\
&\qquad\qquad\qquad\qquad\qquad\qquad\qquad\qquad\qquad\ \forall\ x^\prime_{\rm o}\\
&\tilde{y}(x_{\rm o},a)\geq 0,\quad\forall\ x_{\rm o},a.
\end{aligned}
\end{equation*}

Both linear programmings above are solvable as they have finite constraints (with finite state and action spaces). To see how the disregard of the evolution of the belief process $\{b_t\}_t$ can deteriorate the optimization performance, we first define the operator acting on $\tilde{u}(x_{\rm o})$ as follows:
\begin{equation}\label{operator}
\begin{aligned}
\tilde{\mathcal{L}}\tilde{u}(x_{\rm o})=\min_{a}\bigg\{\tilde{c}(x_{\rm o},a)+\beta \sum_{x_{\rm o}^\prime,x_{\rm u}}\tilde{u}(x_{\rm o}^\prime)\tilde{p}(x_{\rm o}^\prime|x_{\rm o},a)\bigg\}.
\end{aligned}
\end{equation}
To make comparisons, we consider the bellman equation in a full-information setting. In this setting, both $x_{\rm o}$ and $x_{\rm u}$ are available for decision making, resulting in a classical MDP. The objective funtion in this setting is given by
\begin{equation*}
    V^\pi_{{\rm f},\alpha_{\rm u}}(x_{{\rm o},0})=\mathbb{E}^\pi\left[\sum_{t=0}^\infty\beta^t c(x_{{\rm o},t},x_{{\rm u},t},a_t) \bigg| x_{{\rm o},0}\right].
\end{equation*}
Let the value function in the full-information setting be $v(x_{\rm o},x_{\rm u})$, which satisfies the following fixed-point equation
\begin{equation}\label{fullinfo}
\begin{aligned}
v(x_{\rm o},x_{\rm u})=\min_{a}\bigg\{&{c}(x_{\rm o},x_{\rm u},a)\\
&+\beta\sum_{x^\prime_{\rm o},x^\prime_{\rm u}}v(x_{\rm o}^\prime,x_{\rm u}^\prime)p(x_{\rm o}^\prime,x_{\rm u}^\prime|x_{\rm o},x_{\rm u},a)\bigg\}.
\end{aligned}
\end{equation}

Define the operator acting on $\left\{v(x_{\rm o},x_{\rm o})\right\}_{x_{\rm o},x_{\rm u}}$ as  
\begin{equation}\label{fullinfoopt}
\begin{aligned}
&\mathcal{L}v(x_{\rm o},x_{\rm u})=\min_{a}\bigg\{{c}(x_{\rm o},x_{\rm u},a)\\
&\qquad\qquad\qquad\qquad+\beta\sum_{x^\prime_{\rm o},x_{\rm u}^\prime}v(x_{\rm o}^\prime,x_{\rm u}^\prime)p(x_{\rm o}^\prime,x_{\rm u}^\prime|x_{\rm o},x_{\rm u},a)\bigg\}.
\end{aligned}
\end{equation}
Also, the fixed-point equation \eqref{fullinfo} can be transformed to the following linear programming problems.\\

\textbf{Primal LP$^{\prime\prime}$ (Full information case)}\\
\begin{equation*}
\begin{aligned}
\min_{v}&\ \sum_{x_{\rm o}}{v}\left(x_{\rm o},x_{\rm u}\right){\alpha}_{\rm o}(x_{\rm o}){\alpha}_{\rm u}(x_{\rm u})\\
{\rm s.t.} &\ {c}(x_{\rm o},x_{\rm u},a)+\beta \sum_{x_{\rm o}^\prime,x_{\rm u}^\prime}{v}(x_{\rm o}^\prime,x_{\rm u}^\prime)
{p}(x_{\rm o}^\prime,x_{\rm u}^\prime|x_{\rm o},x_{\rm u},a)\\
&\quad\quad\quad\quad \quad\quad\quad\quad\quad\quad\quad\quad\leq {v}(x_{\rm o},x_{\rm u}),\quad \forall\ x_{\rm o},x_{\rm u},a.
\end{aligned}
\end{equation*}

And the corresponding dual LP is given by the following.\\

\textbf{Dual LP$^{\prime\prime}$}
\begin{equation*}
\begin{aligned}
\max_{{y}_{\rm f}}\quad &\sum_{x_{\rm o},x_{\rm u},a}{y}_{\rm f}(x_{\rm o},x_{\rm u},,a){c}(x_{\rm o},x_{\rm u},a)\\
{\rm s.t.}\quad 
& \sum_{x_{\rm o},x_{\rm u},a}\beta {p}(x_{\rm o}^\prime,x_{\rm u}^\prime|x_{\rm o},x_{\rm u},a){y}_{\rm f}(x_{\rm o},x_{\rm u},a)\\
&\quad\ \ =\sum_{a}{y}_{\rm f}(x^\prime_{\rm o},x^\prime_{\rm u},,a)-{\alpha}_{\rm o}(x^\prime_{\rm o}){\alpha}_{\rm u}(x^\prime_{\rm u}),\ \forall\ x^\prime_{\rm o},x^\prime_{\rm u}\\
&{y}_{\rm f}(x_{\rm o},x_{\rm u},,a)\geq 0,\quad\forall\ x_{\rm o},x_{\rm u},a.
\end{aligned}
\end{equation*}

Before we give the main theorem of this section, we present the following two propositions. \cite{puterman2014markov}.
\begin{proposition}
	The operator defined in \eqref{operator} is a contraction mapping and it has the following properties:
	\begin{enumerate}
		\item if $\tilde{u}\geq \tilde{\mathcal{L}}\tilde{u}$, then $\tilde{u}\geq \tilde{u}^*$;
		\item if $\tilde{u}\leq \tilde{\mathcal{L}}\tilde{u}$, then $\tilde{u}\leq \tilde{u}^*$;
		\item if $\tilde{u}= \tilde{\mathcal{L}}\tilde{u}$, then  $\tilde{u}= \tilde{u}^*$.
	\end{enumerate}
\end{proposition}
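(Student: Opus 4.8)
The plan is to treat $\tilde{\mathcal{L}}$ as the Bellman optimality operator of the virtual MDP, acting on the Banach space $(\mathcal{F},\|\cdot\|_\infty)$ of bounded real-valued functions on the finite set $\mathcal{X}_{\rm o}$, and to establish the four assertions in the natural order: first the contraction property (which simultaneously delivers existence, uniqueness, and global convergence of the iterates to the fixed point $\tilde{u}^*$), then monotonicity of the operator, and finally the three ordering properties as consequences of these two facts.

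First I would establish the contraction. Fix an observable substate $x_{\rm o}$ and two functions $\tilde{u},\tilde{v}\in\mathcal{F}$. Using the elementary inequality $|\min_a f(a)-\min_a h(a)|\le\max_a|f(a)-h(a)|$, the pointwise difference $|\tilde{\mathcal{L}}\tilde{u}(x_{\rm o})-\tilde{\mathcal{L}}\tilde{v}(x_{\rm o})|$ is bounded by $\beta\max_a\sum_{x_{\rm o}^\prime}\tilde{p}(x_{\rm o}^\prime|x_{\rm o},a)\,|\tilde{u}(x_{\rm o}^\prime)-\tilde{v}(x_{\rm o}^\prime)|$, and since each $\tilde{p}(\cdot|x_{\rm o},a)$ is a probability distribution summing to one, this is at most $\beta\|\tilde{u}-\tilde{v}\|_\infty$. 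Taking the supremum over $x_{\rm o}$ yields $\|\tilde{\mathcal{L}}\tilde{u}-\tilde{\mathcal{L}}\tilde{v}\|_\infty\le\beta\|\tilde{u}-\tilde{v}\|_\infty$ with $\beta<1$, so $\tilde{\mathcal{L}}$ is a contraction; the Banach fixed-point theorem then furnishes a unique fixed point $\tilde{u}^*$ and guarantees $\tilde{\mathcal{L}}^n\tilde{u}\to\tilde{u}^*$ for every $\tilde{u}\in\mathcal{F}$.

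Next I would record that $\tilde{\mathcal{L}}$ is monotone: if $\tilde{u}\le\tilde{v}$ pointwise, then because $\beta>0$ and each $\tilde{p}(\cdot|x_{\rm o},a)\ge0$, the bracketed expression is nondecreasing in its argument for every $a$, and taking the minimum over $a$ preserves the inequality, so $\tilde{\mathcal{L}}\tilde{u}\le\tilde{\mathcal{L}}\tilde{v}$. For property 1, suppose $\tilde{u}\ge\tilde{\mathcal{L}}\tilde{u}$; applying $\tilde{\mathcal{L}}$ repeatedly and invoking monotonicity produces the decreasing chain $\tilde{u}\ge\tilde{\mathcal{L}}\tilde{u}\ge\tilde{\mathcal{L}}^2\tilde{u}\ge\cdots\ge\tilde{\mathcal{L}}^n\tilde{u}$, and letting $n\to\infty$ with $\tilde{\mathcal{L}}^n\tilde{u}\to\tilde{u}^*$ gives $\tilde{u}\ge\tilde{u}^*$. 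Property 2 follows by the symmetric argument starting from $\tilde{u}\le\tilde{\mathcal{L}}\tilde{u}$, and property 3 is then immediate, either directly from uniqueness of the fixed point or by combining properties 1 and 2.

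I do not expect a substantive obstacle, since this is the standard verification theorem for discounted MDPs; the only points needing care are the handling of the $\min_a$ operator in both the contraction and monotonicity steps, where the min-difference inequality and order preservation must be applied state-by-state, and the justification of passing to the limit in the monotone chains, which rests entirely on the convergence of the iterates already secured by the contraction. I would also note the harmless redundancy in the inner sum over $x_{\rm u}$ appearing in \eqref{operator}: neither $\tilde{u}(x_{\rm o}^\prime)$ nor $\tilde{p}(x_{\rm o}^\prime|x_{\rm o},a)$ depends on $x_{\rm u}$, so \eqref{operator} coincides with the operator underlying \eqref{virtualbellman}, and the argument above applies verbatim.
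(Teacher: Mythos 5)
Your proof is correct and is exactly the standard argument the paper itself relies on: the paper offers no proof of this proposition, simply citing Puterman, and your contraction-plus-monotonicity-plus-Banach-fixed-point argument is the textbook verification found in that reference. One nit: calling the stray sum over $x_{\rm u}$ in \eqref{operator} a \emph{harmless} redundancy is slightly off, since read literally it would scale the discounted term by $|\mathcal{X}_{\rm u}|$ (and could even destroy the contraction when $\beta|\mathcal{X}_{\rm u}|\geq 1$); it is a typo, and your decision to read \eqref{operator} as the Bellman operator of \eqref{virtualbellman} is the intended — and, for the proof, necessary — interpretation.
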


\begin{proposition}
	The operator defined in \eqref{fullinfoopt} is a contraction mapping and it has the following properties:
	\begin{enumerate}
		\item if $v\geq {\mathcal{L}}v$, then $v\geq v^*$;
		\item if $v\leq {\mathcal{L}}v$, then $v\leq v^*$;
		\item if $v= {\mathcal{L}}v$, then  $v= v^*$.
	\end{enumerate}
\end{proposition}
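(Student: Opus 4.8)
The plan is to observe that $\mathcal{L}$ defined in \eqref{fullinfoopt} is precisely the Bellman optimality operator of the classical MDP on the full state space $\mathcal{X}=\mathcal{X}_{\rm u}\times\mathcal{X}_{\rm o}$ with stage cost $c$, transition kernel $p$, and discount $\beta$. Consequently the argument mirrors the one for Proposition 1 (which treats the analogous operator $\tilde{\mathcal{L}}$), and follows the standard development in \cite{puterman2014markov}.

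First I would establish the contraction property. Equip the space of bounded functions on $\mathcal{X}$ with the sup-norm $\lVert v\rVert=\max_{x_{\rm o},x_{\rm u}}\lvert v(x_{\rm o},x_{\rm u})\rvert$, under which this space is complete. For bounded $v_1,v_2$ I would apply the elementary inequality $\lvert\min_a f(a)-\min_a g(a)\rvert\le\max_a\lvert f(a)-g(a)\rvert$ to the two minimizations defining $\mathcal{L}v_1$ and $\mathcal{L}v_2$. The cost terms cancel, and what remains is $\beta$ times a difference of expectations under $p(\cdot|x_{\rm o},x_{\rm u},a)$; since this kernel is a probability measure, $\sum_{x_{\rm o}',x_{\rm u}'}p(x_{\rm o}',x_{\rm u}'|x_{\rm o},x_{\rm u},a)=1$, the difference is bounded by $\beta\lVert v_1-v_2\rVert$ uniformly in $(x_{\rm o},x_{\rm u})$. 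Taking the max over the state gives $\lVert\mathcal{L}v_1-\mathcal{L}v_2\rVert\le\beta\lVert v_1-v_2\rVert$. As $0\le\beta<1$, the operator is a contraction, so by the Banach fixed-point theorem it admits a unique fixed point $v^*$ with $\mathcal{L}v^*=v^*$, and $\mathcal{L}^n v\to v^*$ for every bounded $v$.

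Next I would derive the three comparison properties, for which the key auxiliary fact is monotonicity: if $v\le w$ componentwise, then $\mathcal{L}v\le\mathcal{L}w$, because the bracketed term in \eqref{fullinfoopt} is nondecreasing in $v$ (the weights $\beta\,p(\cdot)$ are nonnegative) and the pointwise minimum over $a$ preserves the order. For property (1), assuming $v\ge\mathcal{L}v$ and iterating $\mathcal{L}$ while invoking monotonicity yields $v\ge\mathcal{L}v\ge\mathcal{L}^2 v\ge\cdots\ge\mathcal{L}^n v$; letting $n\to\infty$ and using $\mathcal{L}^n v\to v^*$ gives $v\ge v^*$. Property (2) is the symmetric argument started from $v\le\mathcal{L}v$. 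Property (3) is immediate, since $v=\mathcal{L}v$ identifies $v$ as a fixed point, and uniqueness from the Banach theorem forces $v=v^*$.

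The development is essentially textbook, so I do not expect a genuine obstacle. The only step requiring care is the contraction bound, specifically the justification of the $\min$-difference inequality and the observation that the normalization $\sum_{x_{\rm o}',x_{\rm u}'}p=1$ is exactly what produces the factor $\beta$ rather than a larger constant; everything else reduces to monotone iteration of a contraction.
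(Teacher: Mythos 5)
Your proof is correct and is exactly the standard argument the paper itself defers to: the paper offers no proof of this proposition, simply citing \cite{puterman2014markov}, and your development (sup-norm contraction via the $\min$-difference inequality, Banach fixed point, then monotone iteration $v\ge\mathcal{L}v\ge\mathcal{L}^2v\ge\cdots\to v^*$ for the comparison properties) is the textbook proof found there. Nothing further is needed.
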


\begin{theorem} (\textbf{Comparison between Full information and Virtual Belief models})\\
	If the transition probability can be decomposed as
	\begin{equation}
	\label{hypothesis}
	p(x_{\rm o}^\prime,x_{\rm u}^\prime|x_{\rm o},x_{\rm u},a)=p(x_{\rm o}^\prime|x_{\rm o},a)p(x_{\rm u}^\prime|x_{\rm u},a),
	\end{equation}
	then 
    \begin{equation*}
	\sup_{x_{\rm o, 0}}\left |	\inf_{\pi\in\Pi}\tilde{V}^\pi_{\alpha_{\rm u}}(x_{\rm o,0})  - \inf_{\pi\in\Pi}  V^\pi_{{\rm f},\alpha_{\rm u}}(x_{{\rm o},0}) \right| 
    \end{equation*}
	is bounded by 
    \begin{equation*}
        C=\max\left\{\frac{-\underline{C}}{1-\beta},\frac{\bar{C}}{1-\beta}\right\},
    \end{equation*}
    where
	\begin{equation*}
	\bar{C}:=\max_{x_{\rm o},{x}_{\rm u},{x}_{\rm u}^\prime,a}\left\{{c}(x_{\rm o},{x}_{\rm u},a)-{c}(x_{\rm },{x}_{\rm u}^\prime,a)\right\},
	\end{equation*}
	and
	\begin{equation*}
	\underline{C}:=\min_{x_{\rm o},{x}_{\rm u},{x}_{\rm u}^\prime,a}\left\{{c}(x_{\rm o},{x}_{\rm u},a)-{c}(x_{\rm },{x}_{\rm u}^\prime,a)\right\}.
	\end{equation*}
\end{theorem}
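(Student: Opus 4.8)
The plan is to reduce the comparison between the two value functions to a single cost-mismatch estimate, exploiting the product form \eqref{hypothesis}. First I would observe that under \eqref{hypothesis} the one-step transition of the observable substate does not depend on $x_{\rm u}$ at all: summing over $x_{\rm u}^\prime$ gives $p(x_{\rm o}^\prime|x_{\rm o},x_{\rm u},a)=p(x_{\rm o}^\prime|x_{\rm o},a)$, and hence the virtual transition collapses to $\tilde p(x_{\rm o}^\prime|x_{\rm o},a)=p(x_{\rm o}^\prime|x_{\rm o},a)$. Thus the two models share \emph{identical} $x_{\rm o}$-dynamics and differ only through their stage costs. The relevant cost gap is exactly what $\underline C$ and $\bar C$ control: since $\tilde c(x_{\rm o},a)=\sum_{\hat x_{\rm u}}b_0(\hat x_{\rm u})c(x_{\rm o},\hat x_{\rm u},a)$ is a convex average over $x_{\rm u}$, for every $(x_{\rm o},x_{\rm u},a)$ I get
\[
\underline C \;\le\; c(x_{\rm o},x_{\rm u},a)-\tilde c(x_{\rm o},a)\;\le\;\bar C .
\]

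Next I would lift the virtual value $\tilde u$ (the fixed point of \eqref{virtualbellman}) to the full-information state space by a constant shift and verify a one-sided fixed-point inequality so that \textit{Proposition 2} applies. Setting $w(x_{\rm o},x_{\rm u})=\tilde u(x_{\rm o})+K$ and using $\sum_{x_{\rm u}^\prime}p(x_{\rm u}^\prime|x_{\rm u},a)=1$ together with $\tilde p(x_{\rm o}^\prime|x_{\rm o},a)=p(x_{\rm o}^\prime|x_{\rm o},a)$, the full-information operator \eqref{fullinfoopt} applied to $w$ reduces to
\[
\mathcal{L}w(x_{\rm o},x_{\rm u})=\beta K+\min_{a}\Big\{c(x_{\rm o},x_{\rm u},a)+\beta\sum_{x_{\rm o}^\prime}\tilde u(x_{\rm o}^\prime)p(x_{\rm o}^\prime|x_{\rm o},a)\Big\}.
\]
Inserting the two-sided cost bound and recognizing that $\min_a\{\tilde c(x_{\rm o},a)+\beta\sum_{x_{\rm o}^\prime}\tilde u(x_{\rm o}^\prime)\tilde p(x_{\rm o}^\prime|x_{\rm o},a)\}=\tilde u(x_{\rm o})$ gives the sandwich $\tilde u(x_{\rm o})+\beta K+\underline C\le \mathcal{L}w\le \tilde u(x_{\rm o})+\beta K+\bar C$. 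Choosing $K=\bar C/(1-\beta)$ yields $w\ge \mathcal{L}w$, so \textit{Proposition 2}(1) gives $\tilde u(x_{\rm o})+\bar C/(1-\beta)\ge v(x_{\rm o},x_{\rm u})$; choosing $K=\underline C/(1-\beta)$ yields $w\le \mathcal{L}w$, so \textit{Proposition 2}(2) gives $\tilde u(x_{\rm o})+\underline C/(1-\beta)\le v(x_{\rm o},x_{\rm u})$. Together these produce the pointwise bound $|v(x_{\rm o},x_{\rm u})-\tilde u(x_{\rm o})|\le C$.

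Finally I would translate this pointwise estimate into the stated bound on the infima. On the virtual side $\inf_{\pi}\tilde V^{\pi}_{\alpha_{\rm u}}(x_{{\rm o},0})=\tilde u(x_{{\rm o},0})$, since the virtual MDP data are independent of $x_{\rm u}$. On the full-information side, because the benchmark agent observes $x_{{\rm u},0}\sim\alpha_{\rm u}$ and a single stationary optimal policy attains $v$ from every initial pair, $\inf_{\pi}V^{\pi}_{{\rm f},\alpha_{\rm u}}(x_{{\rm o},0})=\sum_{x_{\rm u}}\alpha_{\rm u}(x_{\rm u})\,v(x_{{\rm o},0},x_{\rm u})$. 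Averaging the pointwise bound against $\alpha_{\rm u}$ (which sums to one) and taking the supremum over $x_{{\rm o},0}$ then delivers the claim. I expect the main obstacle to be the careful justification of the constant-shift step, namely that $\mathcal{L}$ applied to $\tilde u(x_{\rm o})+K$ collapses exactly into the virtual Bellman operator plus an additive $\beta K$ — this hinges entirely on the factorization \eqref{hypothesis}, and without it the residual coupling through $p(x_{\rm o}^\prime|x_{\rm o},x_{\rm u},a)$ would spoil the argument. A secondary delicate point is the identity $\inf_{\pi}V^{\pi}_{{\rm f},\alpha_{\rm u}}=\sum_{x_{\rm u}}\alpha_{\rm u}v$, which relies on full observability of the initial unobservable substate in the benchmark model.
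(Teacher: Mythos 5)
Your proof is correct and yields the same constant $C$, but the step converting the cost gap into a value gap uses a genuinely different device than the paper's. Both arguments rest on the same two pillars: under \eqref{hypothesis} the virtual kernel coincides with the observable kernel, $\tilde p(x_{\rm o}^\prime|x_{\rm o},a)=p(x_{\rm o}^\prime|x_{\rm o},a)$, and since $\tilde c(x_{\rm o},a)$ is a $b_0$-average of $c(x_{\rm o},\cdot,a)$ one has $\underline C\le c(x_{\rm o},x_{\rm u},a)-\tilde c(x_{\rm o},a)\le\bar C$. From there the paper lifts $\tilde u^*$ to the full state space, proves the residual bounds $\underline C\le \tilde u^*-\mathcal{L}\tilde u^*\le\bar C$, and then extracts the value bound through a case split on the sign of $\tilde u^*-\mathcal{L}\tilde u^*$ combined with the inequality $\mathcal{L}v^*-\mathcal{L}\tilde u^*\ge\beta\left(v^*-\tilde u^*\right)$, which it attributes to contraction; read literally, that pointwise inequality does not follow from contraction alone (it needs the monotonicity fact $\mathcal{L}f-\mathcal{L}g\ge\beta\min(f-g)$ and an evaluation at the maximizing state), and the case split is itself delicate because $\tilde u^*\ge\mathcal{L}\tilde u^*$ may hold at some states and fail at others, whereas the comparison propositions require a global inequality. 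Your constant-shift construction sidesteps both issues: $w=\tilde u+\bar C/(1-\beta)$ is by construction a \emph{global} super-solution ($w\ge\mathcal{L}w$) and $w=\tilde u+\underline C/(1-\beta)$ a global sub-solution ($w\le\mathcal{L}w$), so the properties of the proposition for the operator \eqref{fullinfoopt} apply directly and sandwich $v^*$ with no case analysis and no contraction step. You also supply a step the paper leaves implicit: identifying $\inf_{\pi}\tilde V^{\pi}_{\alpha_{\rm u}}(x_{{\rm o},0})=\tilde u(x_{{\rm o},0})$ and $\inf_{\pi}V^{\pi}_{{\rm f},\alpha_{\rm u}}(x_{{\rm o},0})=\sum_{x_{\rm u}}\alpha_{\rm u}(x_{\rm u})\,v^*(x_{{\rm o},0},x_{\rm u})$ (valid because a single stationary policy is optimal from every initial state of the full-information MDP), and then averaging the pointwise bound before taking the supremum over $x_{{\rm o},0}$. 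In short: same ingredients, but your sub/super-solution route is self-contained and more robust, while the paper's residual-plus-contraction route is shorter as written but needs repair to be fully airtight.
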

\begin{proof}
	Let $\tilde{u}^*(x_{\rm o})$ and $v^*(x_{\rm o},{x}_{\rm u})$ be the optimal solution to the virtual belief model and the full information setting, respectively. Then, the corresponding value functions of these models are given by $\sum_{x_{\rm o}}\tilde{u}^*(x_{\rm o}){\alpha}_{\rm o}(x_{\rm o})$ and $\sum_{x_{\rm o},{x}_{\rm u}}v^*(x_{\rm o},{x}_{\rm u}){\alpha}_{\rm o}(x_{\rm o}){\alpha}_{\rm u}({x}_{\rm u})$, respectively. Let $\tilde{a}^*$ be the optimal action (which depends upon $x_o$) that achieves the minimum in \eqref{virtualbellman} and ${a}^*$ be the optimal action (which depends upon $x_o, x_u$) that achieves the minimum in \eqref{fullinfoopt}. 
	
	With an abuse of notation, we define for any $x_{\rm u}$:
	\begin{equation*}
	\mathcal{L} \tilde{u}^*(x_{\rm o}):= 
	\mathcal{L}\tilde{\tilde{u}}^*(x_{\rm o}, x_{\rm u}),
	\end{equation*}
	where $\tilde{\tilde{u}}^*(x_{\rm o}, x_u):=\tilde{u}^*(x_{\rm o}).$
	%
	By definition and the given hypothesis, for any   $(x_{\rm o},x_{\rm u})$ (with $a^* = a^* (x_o, x_u)$),
    \begin{equation*}
    \begin{aligned}
    \mathcal{L}\tilde{u}^*(x_{\rm o})&={c}(x_{\rm o},x_{\rm u},a^*)
    +\beta\sum_{x^\prime_{\rm o},x_{\rm u}^\prime}\tilde{u}^*(x^\prime_{\rm o})p(x_{\rm o}^\prime,x_{\rm u}^\prime|x_{\rm o},x_{\rm u},a^*)\\
    &={c}(x_{\rm o},x_{\rm u},a^*)
    +\beta\sum_{x^\prime_{\rm o}}\tilde{u}^*(x^\prime_{\rm o})p(x_{\rm o}^\prime|x_{\rm o},a^*).
    \end{aligned}
    \end{equation*}
Further from (\ref{operator}) (for any $a^*(x_o,x_u)$)
\begin{eqnarray*}
\tilde{\mathcal{L}}\tilde{u}^*(x_{\rm o})
&\le & \tilde{c}(x_{\rm o},a^*)+\beta \sum_{x_{\rm o}^\prime}\tilde{u}^*(x_{\rm o}^\prime)\tilde{p}(x_{\rm o}^\prime|x_{\rm o},a^*)\\ \mbox{ and under hypothesis (\ref{hypothesis}) } \hspace{-10mm}&& \\
\tilde{p}(x_{\rm o}^\prime|x_{\rm o},a) &=&
{p}(x_{\rm o}^\prime|x_{\rm o},a).
\end{eqnarray*}  
Thus we have: 
	\begin{equation*}
	\begin{aligned}
	\tilde{u}^*(x_{\rm o})-\mathcal{L}\tilde{u}^*(x_{\rm o})&=\tilde{\mathcal{L}}\tilde{u}^*(x_{\rm o})-\mathcal{L}\tilde{u}^*(x_{\rm o})\\
	&\leq \max_{x_{\rm u}}\left\{\tilde{c}(x_{\rm o},{a}^*)-{c}(x_{\rm o},x_{\rm u},{a}^*)\right\}\\
	&\leq \max_{x_{\rm o},{x}_{\rm u},{x}_{\rm u}^\prime,a}\left\{{c}(x_{\rm o},{x}^\prime_{\rm u},a)-{c}(x_{\rm o},{x}_{\rm u},a)\right\}\\
	&=\bar{C}.
	\end{aligned}
	\end{equation*}

	On the other hand, now using $\tilde{a}^*$ we have
	\begin{equation*}
	\begin{aligned}
	\tilde{u}^*(x_{\rm o})-\mathcal{L}\tilde{u}^*(x_{\rm o})&=\tilde{\mathcal{L}}\tilde{u}^*(x_{\rm o})-\mathcal{L}\tilde{u}^*(x_{\rm o})\\
	&\geq \min_{x_{\rm u}}\left\{\tilde{c}(x_{\rm o},\tilde{a}^*)-{c}(x_{\rm o},x_{\rm u},\tilde{a}^*)\right\}\\
	&\geq \min_{x_{\rm o},{x}_{\rm u},{x}_{\rm u}^\prime,a}\left\{{c}(x_{\rm o},{x}^\prime_{\rm u},a)-{c}(x_{\rm o},{x}_{\rm u},a)\right\}\\
	&=\underline{C}.
	\end{aligned}
	\end{equation*}
	
	First consider the case where $\tilde{u}^*(x_{\rm o})\geq \mathcal{L}\tilde{u}^*(x_{\rm o})$.  Then by \textit{Proposition { 3 $\&$ 4}}, 
	\begin{equation}\label{ineqs}
	\tilde{u}^*(x_{\rm o})=\tilde{\mathcal{L}}\tilde{u}^*(x_{\rm o})\geq \mathcal{L}v^*(x_{\rm o},{x}_{\rm u})=v^*(x_{\rm o},{x}_{\rm u}) \mbox{ for any } x_u.
	\end{equation}
	Then,
	\begin{equation*}
    \begin{aligned}
        \bar{C}&\geq \tilde{u}^*(x_{\rm o})-\mathcal{L}\tilde{u}^*(x_{\rm o})\\
        &= \tilde{u}^*(x_{\rm o})-v^*(x_{\rm o},x_{\rm u})+\mathcal{L}v^*(x_{\rm o},x_{\rm u})-\mathcal{L}\tilde{u}^*(x_{\rm o}).
    \end{aligned}
	\end{equation*}
	Since $\mathcal{L}$ is a contraction mapping
    \begin{equation*}
        \begin{aligned}
        \mathcal{L}v^*(x_{\rm o},x_{\rm u})-\mathcal{L}\tilde{u}^*(x_{\rm o})\geq \beta \left(v^*(x_{\rm o},x_{\rm u})-\tilde{u}^*(x_{\rm o})\right).
        \end{aligned}
    \end{equation*}
    Therefore,
    \begin{equation*}
        \bar{C}\geq (1-\beta)\left(\tilde{u}^*(x_{\rm o})-v^*(x_{\rm o},x_{\rm u})\right)
    \end{equation*}
    Thus,
    \begin{equation*}
        \tilde{u}^*(x_{\rm o})-v^*(x_{\rm o},x_{\rm u})\leq \frac{\bar{C}}{1-\beta}.
    \end{equation*}
	By the same arguments, when $\tilde{u}^*(x_{\rm o})\leq \mathcal{L}\tilde{u}^*(x_{\rm o})$, and we acquire the similar inequality with bound $-\underline{C}/(1-\beta)$. Hence the proof of this theorem is completed.
\end{proof}
\qed

\begin{remark}
	The theorem above states that, even though the direct observation of ${x}_{\rm u}$ cannot be obtained, we can still guarantee that the performance is deteriorated at most by a bounded term.
\end{remark}

We note that the bound on the difference is dependent on the structure of the cost function with respect to $x_{\rm u}$. More explicitly, the bounds depend on the sensitivity of $c(x_{\rm o},x_{\rm u},a)$ with respect to the change in $x_{\rm u}$. 

Also, we can compare the value function in virtual belief method and the value function define in \eqref{opt3}. The comparison results are stated in the following theorem.

\begin{theorem}(\textbf{Comparison between POMDP and Virtual Belief Models})\\
	If the transition probability can be decomposed as
	\begin{equation*}
	p(x_{\rm o}^\prime,x_{\rm u}^\prime|x_{\rm o},x_{\rm u},a)=p(x_{\rm o}^\prime|x_{\rm o},a)p(x_{\rm u}^\prime|x_{\rm u},a),
	\end{equation*}
	then 
    \begin{equation*}
	\sup_{x_{\rm o, 0}}\left |	\inf_{\pi\in\Pi}V^\pi_{\alpha_{\rm u}}(x_{\rm o,0})  - \inf_{\pi\in\Pi} \tilde{V}^{\pi}_{\alpha_{\rm u}}(x_{{\rm o},0}) \right| 
    \end{equation*}
    is bounded by 
    \begin{equation*}
        C^\prime=\max\left\{\frac{-\underline{C}^\prime}{1-\beta},\frac{\bar{C}^\prime}{1-\beta}\right\},
    \end{equation*}
    where
	\begin{equation*}
	\bar{C}^\prime:=\max_{b,b^\prime\in\mathcal{B},x_{\rm o},a} \left \{ \bar{c}(x_{\rm o},b,a)-\bar{c}(x_{\rm o},b^\prime,a) \right \},
	\end{equation*}
	and
	\begin{equation*}
	\underline{C}^\prime:=\min_{b,b^\prime\in\mathcal{B},x_{\rm o},a} 
	\left \{ \bar{c}(x_{\rm o},b,a)-\bar{c}(x_{\rm o},b^\prime,a) \right \}.
	\end{equation*}
\end{theorem}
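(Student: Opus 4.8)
The plan is to reproduce the argument of the preceding theorem almost verbatim, with the belief vector $b$ now playing the role that the hidden substate $x_{\rm u}$ played in the full-information comparison. First I would reduce the two quantities in the statement to fixed points of two Bellman operators. By Lemma~1 together with the three-way $\inf$ identity established earlier, $\inf_{\pi}V^\pi_{\alpha_{\rm u}}(x_{{\rm o},0})$ equals $u^*(x_{{\rm o},0},b_0)$ with $b_0=\alpha_{\rm u}$, where $u^*$ is the unique fixed point of the belief-MDP operator read off from \eqref{DP1},
\begin{equation*}
\mathcal{L}_{\rm b}u(x_{\rm o},b)=\min_{a}\Big\{\bar{c}(x_{\rm o},b,a)+\beta\sum_{x_{\rm o}^\prime,b^\prime}u(x_{\rm o}^\prime,b^\prime)\,p(x_{\rm o}^\prime,b^\prime|x_{\rm o},b,a)\Big\};
\end{equation*}
and by the construction \eqref{virtualbellman}, $\inf_{\pi}\tilde{V}^\pi_{\alpha_{\rm u}}(x_{{\rm o},0})=\tilde{u}^*(x_{{\rm o},0})$, the fixed point of $\tilde{\mathcal{L}}$. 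It therefore suffices to bound $|u^*(x_{{\rm o},0},b_0)-\tilde{u}^*(x_{{\rm o},0})|$ uniformly in $x_{{\rm o},0}$.

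The decisive preliminary step is to exploit the product hypothesis \eqref{hypothesis} to decouple the belief dynamics from the observations. Substituting \eqref{hypothesis} into the belief update \eqref{update}, the factor $p(x_{\rm o}^\prime|x_{\rm o},a)$ cancels between numerator and denominator, so the update reduces to $b^\prime(x_{\rm u}^\prime)=\sum_{x_{\rm u}}p(x_{\rm u}^\prime|x_{\rm u},a)b(x_{\rm u})$, which is independent of $x_{\rm o}$ and $x_{\rm o}^\prime$; likewise the observation kernel collapses to $p(x_{\rm o}^\prime|x_{\rm o},b,a)=p(x_{\rm o}^\prime|x_{\rm o},a)=\tilde{p}(x_{\rm o}^\prime|x_{\rm o},a)$, independent of $b$. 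This is exactly what is needed so that the transition part of $\mathcal{L}_{\rm b}$, when applied to a function that does not depend on $b$, coincides with the transition part of $\tilde{\mathcal{L}}$.

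I would then mirror the proof of the preceding theorem. Lift $\tilde{u}^*$ to the belief space by setting $\tilde{\tilde{u}}^*(x_{\rm o},b):=\tilde{u}^*(x_{\rm o})$, constant in $b$. Because the transition terms now coincide, applying $\mathcal{L}_{\rm b}$ to this lift and comparing with the identity $\tilde{u}^*=\tilde{\mathcal{L}}\tilde{u}^*$ leaves only the stage-cost discrepancy, and since $\tilde{c}(x_{\rm o},a)=\bar{c}(x_{\rm o},b_0,a)$ this discrepancy is $\bar{c}(x_{\rm o},b_0,a)-\bar{c}(x_{\rm o},b,a)$. Choosing the minimizing action of each operator as in the earlier proof yields, for every $x_{\rm o}$ and every reachable $b\in\mathcal{B}$, the two-sided one-step estimate $\underline{C}^\prime\le \tilde{u}^*(x_{\rm o})-\mathcal{L}_{\rm b}\tilde{\tilde{u}}^*(x_{\rm o},b)\le \bar{C}^\prime$. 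Splitting into the cases $\tilde{u}^*\ge \mathcal{L}_{\rm b}\tilde{\tilde{u}}^*$ and $\tilde{u}^*\le \mathcal{L}_{\rm b}\tilde{\tilde{u}}^*$, using the monotonicity of $\mathcal{L}_{\rm b}$ (the analogue of Propositions 3 and 4) to conclude $\tilde{\tilde{u}}^*\ge u^*$ respectively $\tilde{\tilde{u}}^*\le u^*$, and then applying the contraction of $\mathcal{L}_{\rm b}$ precisely as before, propagates this one-step gap into the bounds $\bar{C}^\prime/(1-\beta)$ and $-\underline{C}^\prime/(1-\beta)$. Evaluating at $b=b_0$ and taking the supremum over $x_{{\rm o},0}$ gives the claim.

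I expect the crux to be the decoupling step: verifying that under \eqref{hypothesis} the observation kernel $p(x_{\rm o}^\prime|x_{\rm o},b,a)$ loses its dependence on $b$ and the belief update detaches from the observed trajectory. Without the product form, the transition terms of $\mathcal{L}_{\rm b}$ and $\tilde{\mathcal{L}}$ would not cancel and the clean cost-only estimate would fail. A secondary point demanding care is that $\mathcal{L}_{\rm b}$ acts on the possibly infinite state space $\mathcal{X}_{\rm o}\times\mathcal{B}$, so I must confirm it remains a sup-norm contraction enjoying the monotonicity of Propositions 3 and 4 (guaranteed by the boundedness of $\bar{c}$ and $\beta<1$), and that every belief entering the estimate lies in $\mathcal{B}$, so that the maximum and minimum defining $\bar{C}^\prime$ and $\underline{C}^\prime$ legitimately dominate the one-step gap.
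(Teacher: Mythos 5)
Your proposal is correct and takes essentially the same route as the paper: the paper's own proof is a one-line remark that the argument of the preceding full-information comparison theorem carries over, and your write-up is precisely that argument transplanted to the belief MDP, with $b$ playing the role of $x_{\rm u}$, the constant-in-$b$ lift $\tilde{\tilde{u}}^*$, the two-sided one-step cost estimate against $\bar{C}^\prime$ and $\underline{C}^\prime$, and the same monotonicity/contraction step. You in fact supply details the paper leaves implicit, notably the collapse of the belief update and of $p(x_{\rm o}^\prime|x_{\rm o},b,a)$ to $\tilde{p}(x_{\rm o}^\prime|x_{\rm o},a)$ under the product hypothesis, and the check that the Bellman operator on $\mathcal{X}_{\rm o}\times\mathcal{B}$ remains a sup-norm contraction with the needed monotonicity.
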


\begin{proof}
	The proof of \textit{Theorem 6} largely relies on the proof of \textit{Theorem 5}.
\end{proof}
\qed

The results in \textit{Theorem 5 $\&$ 6} hold under the assumption that the transition probability can be decomposed. Generally, when this assumption does not hold, even if the cost function does not change significantly with respect to $x_{\rm u}$, the results may not hold. In such cases, the update of the belief is required for estimating the evolution of observable part $\{x_{{\rm o},t}\}_t$.


\section{SPECIAL CASES}

In this section, we discuss several special cases concerning the structure of the system dynamics, cost function, and transition probabilities.

\subsection{$x_{\rm u}=\bar{g}(x_{\rm o})$ or $\mathcal{X}_{\rm u}=\emptyset$}
If $x_{\rm u}=\bar{g}(x_{\rm o})$, the unobservable substate can be fully determined from the observable substate. That is, we can infer the true value of the unobservable state from the observation. Then, the overall state can be fully characterized by $x_{\rm o}$. Therefore, $x_{\rm o}$ is sufficient to represent the overall state of the system. As we mentioned earlier, in this case, the system reduces to MDP. It is straightforward to see that \eqref{virtualbellman} and \eqref{fullinfo} coincide and thus they yield the same value function. Similar arguments hold for the case where $\mathcal{X}_{\rm u}=\emptyset$.

\subsection{$\mathcal{X}_{\rm o}=\emptyset$}
In this case, the system is a deterministic system in which the state can be fully characterized by the belief $b$. And the approximated optimization faced here is given by 
\begin{equation}\label{classicalOC}
\begin{aligned}
    \min_{\pi}&\ \sum_{t=0}^{\infty}\beta^t b_t^{\rm T}c(a)\\
    \text{s.t.}&\ b_{t+1}=P_{\rm u}(a)b_{t}
\end{aligned}
\end{equation}
Here, with a slight abuse of notation, $c(a)=\left\{c(:,a)\right\}$ and $P_{\rm u}(a)$ is the transition matrix of $x_{\rm u}$ for a given action $a$. The optimization above is a classical nonlinear optimal control problem.

\subsection{${c}(x_{\rm o},{x}_{\rm u},a)={c}(x_{\rm o},{x}^\prime_{\rm u},a)$}
This case is trivial as the stage cost function is no longer a function of the unobservable substate. Thus, $\bar{C}=\underline{C}$. Here, $\bar{C}$ and $\underline{C}$ are defined in the proof of \textbf{Theorem 2}. The two value functions are equal and the virtual belief method loses no performance.

\subsection{$p({x}_{\rm o}^\prime,{x}_{\rm u}^\prime|{x}_{\rm o},{x}_{\rm u},a)=p({x}_{\rm o}^\prime|{x}_{\rm o},a)p({x}_{\rm u}^\prime|{x}_{\rm u})$}
In this case, the independent random process, $\left\{{x}_{{\rm u},t}\right\}_{t}$, is not controllable and thus evolves independently respect to the actions. By assuming that the transition kernel $p(x^\prime_{\rm u}|x_{\rm u})$ is ergodic, we denote the stationary measure of $x_{\rm u}$ by $b_{\rm s}(x_{\rm u})$ and the corresponding belief vector by $b_{\rm s}$. In such that a setting, as there exist stationary measures over $x_{\rm o}$ and $x_{\rm u}$ jointly, we can reduce \eqref{DP1} to
\begin{equation}\label{stat}
u_{\rm s}(x_{\rm o})=\min_{a}\bigg\{\bar{c}(x_{\rm o},b_{\rm s},a)+\beta\sum_{x_{\rm o}^\prime}u_{\rm s}(x_{\rm o}^\prime)p(x_{\rm o}^\prime|x_{\rm o},a)\bigg\},
\end{equation}
which leads to tractable linear programmings as the state space and number of constraints are finite. As for the virtual belief method, if we replace the initial belief vector $b_0$ with $b_{\rm s}$, then it will yield the optimal solution.


\section{NUMERICAL EXAMPLE}
In this section, we use numerical experiments to demonstrate o results.
Consider the following dynamical system:
\begin{equation*}
	\mathcal{A}=\{0,1\},\ \mathcal{X}_{\rm o}=\{0,1\},\ \mathcal{X}_{\rm u}=\{0,1\},
\end{equation*}
 Let $P_{\rm o}(a)$ and  $P_{\rm u}(a)$ be the transition matrices of the observable substate and unobservable substate, respectively.
\begin{equation*}
\begin{aligned}
P_{\rm o}(0)=\begin{bmatrix}
0.8 & 0.2\\ 
0.5 & 0.5
\end{bmatrix},\quad 	P_{\rm o}(1)=\begin{bmatrix}
0.2 & 0.8\\ 
0.8 & 0.2
\end{bmatrix},
\end{aligned}
\end{equation*}
\begin{equation*}
\begin{aligned}
P_{\rm u}(0)=\begin{bmatrix}
0.2 & 0.8\\ 
0.8 & 0.2
\end{bmatrix},\quad 	P_{\rm u}(1)=\begin{bmatrix}
0.8 & 0.2\\ 
0.5 & 0.5
\end{bmatrix}.
\end{aligned}
\end{equation*}
Let $\mathcal{C}(a)=\{c(x_{\rm o},x_{\rm u},a)\}_{\mathcal{X}_{\rm o}\times\mathcal{X}_{\rm u}}$ be the cost matrix.
\begin{equation*}
	\mathcal{C}(0)=\begin{bmatrix}
	2 & 0.2\\ 
	1 & 0.4
	\end{bmatrix},\quad \text{and}\quad 
		\mathcal{C}(1)=\begin{bmatrix}
	0.7 & 0.4\\ 
	1 & 0.4
	\end{bmatrix}.
\end{equation*}
And the probabilities of $x_{{\rm u},0}=0$ and $x_{{\rm o},0}=0$ are both assumed to be 1/2. The discount factor is set as $\beta =1/2$.

First consider the system of full information, i.e., the agent has access to both $x_{\rm o}$ and $x_{\rm u}$. The optimal value found by solving LP is approximately $2.0524$. And if $x_{\rm u}$ cannot be observed, using virtual belief method and we have that the policy $d(x_{\rm o})=1$, $\forall\ x_{\rm o}$. and we obtain the value as $2.3714$, which is bounded by $C$.

Another way to find the find the stationary policy is by solving the following constrained linear programming:\\

\textbf{(Constrained) Dual LP$^*$}
\begin{equation*}
\begin{aligned}
\min_{y_{\rm c}}\quad &\sum_{a,x_{\rm o},x_{\rm u}}y_{\rm c}(x_{\rm o},x_{\rm u},a)c(x_{\rm o},x_{\rm u},a)\\
{\rm s.t.}\quad 
& \sum_{a}y_{\rm c}(x^\prime_{\rm o},x^\prime_{\rm u},a)-\alpha_{\rm o}(x^\prime_{\rm o})\alpha_{\rm u}(x_{\rm u}^\prime)\\
&\  =\sum_{x_{\rm o},x_{\rm u},a}\beta p(x^\prime_{\rm o},x^\prime_{\rm u}|x_{\rm o},x_{\rm u},a)y_{\rm c}(x_{\rm o},x_{\rm u},a),\quad \forall\ x^\prime_{\rm o},x_{\rm u}^\prime\\
& y_{\rm c}(x_{\rm o},x_{\rm u},a)=y_{\rm c}(x_{\rm o},x^\prime_{\rm u},a),\quad\forall\ x_{\rm o}, x_{\rm u},x^\prime _{\rm u},a.\\
\end{aligned}
\end{equation*}

Here, $y(x_{\rm o},x_{\rm u},a)$ is the measure function measuring the frequency of the system visiting the state-action pair $(x_{\rm o},x_{\rm u},a)$. The constraint arises from the fact that $x_{\rm u}$ cannot be observed and used in the policy in the system whose state is solely $x_{\rm o}$. The corresponding primal LP is given by\\

\textbf{Primal LP$^*$}
\begin{equation*}
\begin{aligned}
\min_{u_{\rm c}}&\sum_{x_{\rm o},x_{\rm u}}u_{\rm c}\left(x_{\rm o},x_{\rm u}\right)\alpha_{\rm o}(x_{\rm o})\alpha_{\rm u}(x_{\rm u})\\
\text{s.t.} &\sum_{x_{\rm u}}u_{\rm c}(x_{\rm o},x_{\rm u})-\sum_{x_{\rm u}}c(x_{\rm o},x_{\rm u},a)\\
&\quad\ \ \geq \beta \sum_{x_{\rm u},x_{\rm o}^\prime,x_{\rm u}^\prime}u_{\rm c}(x_{\rm o}^\prime,x_{\rm u}^\prime)
p(x_{\rm o}^\prime,x_{\rm u}^\prime|x_{\rm o},x_{\rm u},a), \quad \forall\ x_{\rm o},a.
\end{aligned}
\end{equation*}

Note that there exists one-to-one correspondence between the solutions to \textbf{Primal LP$^*$} and \textbf{Dual LP$^*$}. There is no dynamic programming equation associated with \textbf{Primal LP$^*$}, yet it provides a numerical method to compute the stationary policy. The optimal deterministic stationary found using the constrained LPs is given by $d_{\rm MD}(x_{\rm o})=1$, $\forall\ x_{{\rm o}}$, which yields value $1.8706$.

\section{CONCLUSIONS AND FUTURE WORKS}

\subsection{Conclusions}

In this paper, we have studied LSI-MDP, which is a Markov decision process with incomplete state information. In this model, the state can be divided into two parts, one of which is observable and the other is unobservable. System of this kind is closely related to MDP and POMDP and we have pointed out their relations. We have shown that directly solving optimal control problem in such systems is challenging using the classical linear programming approach, as the number of decision variables (or constraints) is possibly infinite. We have proposed a new method to tackle this challenge, which provides a suboptimal solution. We have provided  bounds on  the difference between optimal and sub-optimal solution, under certain separability conditions.

\subsection{Future Works}
When coping with an optimization problem with uncertainty, the agent can either average the randomness out or be robust to the uncertainty. As a consequence, if we consider the unobservable substate as the source of uncertainty, we can formulate a robust optimal control problem, where the objection is given by the following:
\begin{equation*}
\begin{aligned}
V^{\pi}_{\rm R}(x_{{\rm o},0}) 
=\sup_{\mathbf{x}_u} \sum_{t=0}^\infty
\mathbb{E}^{\pi}\left[  \beta^tc(x_{{\rm o},t},x_{{\rm u},t},a_t)\bigg| \{ \mathbf{x}_{{\rm u},s} \}_{s \leq t} \right]. 
\end{aligned}
\end{equation*}
The agent aims to find an optimal policy while being robust to all the possible trajectories of the unobservable substate, $\mathbf{x}_{\rm u}$. It is worth noting that the  transition probabilities and the cost function uncertainty share the same uncertainty induced by $x_{\rm u}$, which makes the robust problem NP-hard as shown in \cite{mannor2012lightning,bagnell2001solving}. %

\appendix
\section{Proof of Theorem 1}    
\begin{proof}
    \begin{itemize}
        \item[a)] 	The assumptions imply that, given $x_{\rm o}$ and $x_{\rm u}$, we can uniquely determine $\tilde{x}_{\rm u}$ as $g(x_{\rm o},\cdot)$ is injective with respect to $\tilde{x}_{\rm u}$. Thus,
	\begin{equation*}
	\begin{aligned}
	p(x_{\rm o}^\prime,{x}_{\rm u}^\prime|x_{\rm o},x_{\rm u},a)&=p(x_{\rm o}^\prime,g(x_{\rm o}^\prime,\tilde{x}_{\rm u}^\prime)|x_{\rm o},g(x_{\rm o},\tilde{x}_{\rm u}),a)\\
	&=p(x_{\rm o}^\prime,\tilde{x}_{\rm u}^\prime|x_{\rm o},\tilde{x}_{\rm u},a).
	\end{aligned}
	\end{equation*}
	we obtain the following decomposition of the transition probability, 
    \begin{equation*}
    p(x_{\rm o}^\prime,x_{\rm u}^\prime|x_{\rm o},x_{\rm u},a)=p(x_{\rm o}^\prime|x_{\rm o},a)p(\tilde{x}_{\rm u}^\prime|\tilde{x}_{\rm u},a),
    \end{equation*}
    as $\tilde{x}_{\rm u}$ and $x_{\rm o}$ are independent. Therefore, we can transform the joint random process $(x_{\rm o},x_{\rm u})$ into two conditionally  independent random processes (conditioned on the sequence of decisions).
	\item[b)] We prove the the second part of the theorem by induction. At the initial time, $x_{{\rm o},0}$ is given and there exists a distribution of $x_{{\rm u},0}$, which is denoted by $\alpha_{\rm u}(x_{\rm u})$. We can express the unobservable state as a function $g(\cdot,\cdot)$ such that
	\begin{equation}\label{compose2}
	X_{{\rm u},0} = g(x_{{\rm o},0},\Tilde{X}_{{\rm u},0}),
	\end{equation}
	where $\Tilde{X}_{{\rm u},0}$ is a random variable that is conditionally independent of $X_{{\rm o},0}$. At time $t$, we assume that there exists $\Tilde{X}_{{\rm u},t}$ which is independent of $X_{{\rm o},t}$ such that 
	\begin{equation*}
	X_{{\rm u},t} = g(X_{{\rm o},t},\Tilde{X}_{{\rm u},t}).
	\end{equation*}
	Then, given the realizations of $X_{{\rm o},t}$ and $\Tilde{X}_{{\rm u},t}$, we can determine the value of $x_{{\rm o},t}$ and ${x}_{{\rm u},t}$. \\
	At time $t+1$, there exists a function $\tilde{f}$ such that
	\begin{equation*}
	\begin{aligned}
	(X_{{\rm o},t+1},X_{{\rm u},t+1})&=f(X_{{\rm o},t},X_{{\rm u},t},a,\Gamma)\\
	&=f(X_{{\rm o},t},g(X_{{\rm o},t},\Tilde{X}_{{\rm u},t}),a,\Gamma)\\
	&=\Tilde{f}(X_{{\rm o},t},\Tilde{X}_{{\rm u},t},a,\Gamma).
	\end{aligned}
	\end{equation*}
	Hence the second part of the theorem follows.
    \end{itemize}
\end{proof}
\qed

\bibliographystyle{splncs04}
\bibliography{biblio}
\end{document}